%% file: main.tex
\documentclass[runningheads]{llncs}

\usepackage[T1]{fontenc}
\usepackage{amsmath}
\usepackage{amssymb}
\usepackage{graphicx}
\usepackage{hyperref}
\hypersetup{
    colorlinks=true,
    linkcolor=blue,
    urlcolor=blue,
    filecolor=blue,
    citecolor=blue
}
\usepackage{algorithm}
\usepackage{algpseudocode}

\usepackage{tikz}
\usetikzlibrary{shapes.geometric, arrows.meta, positioning, fit, decorations.pathreplacing}
\usepackage{calc}
\usepackage[all]{nowidow}
\usepackage{enumitem}
\usepackage{listings}
\usepackage{tabularx}
\usepackage{multicol}
\usepackage{booktabs}
\usepackage{longtable}
\usepackage[dvipsnames]{xcolor}

\usepackage{microtype}
\microtypesetup{protrusion=true, expansion=true}
\usepackage[none]{hyphenat}


\newcommand{\Pipe}{\mathbf{Pipe}}

\newcommand{\Mult}{\mathbf{Mult}}
\newcommand{\MultH}{\mathbf{Mult}_H}
\newcommand{\PipePair}{\mathbf{PipePair}}
\newcommand{\NB}{\operatorname{NB}}
\newcommand{\Migr}{\operatorname{Migr}}
\newcommand{\inv}{\mathrm{inv}}
\newcommand{\Rpos}{\mathbb{R}_{>0}}
\newcommand{\Rge}{\mathbb{R}_{\ge 1}}
\DeclareMathOperator*{\argmin}{arg\,min}

\title{Constraint Migration: A Formal Theory of Throughput
in AI Cybersecurity Pipelines}
\author{Surasak Phetmanee}
\institute{Department of Electrical and Computer Engineering, Faculty of Engineering, Thammasat School of Engineering, Thammasat University, Thailand\\
\email{psurasak@engr.tu.ac.th}}

\titlerunning{A Formal Theory of Throughput
in AI Cybersecurity Pipelines}
\authorrunning{S. Phetmanee}

\spnewtheorem{assumption}{Assumption}{\bfseries}{\itshape}
\spnewtheorem{microlemma}{Micro Lemma}{\bfseries}{\itshape}

\begin{document}
\maketitle

\begin{abstract}
We develop a formal theory of throughput in finite serial pipeline systems subject to stage multiplicative capacity perturbations, motivated by the deployment of AI tools in cybersecurity operations. A pipeline is a finite totally ordered set of stages each with a positive capacity throughput is the minimum stage capacity.
An admissible multiplier assigns to each stage an improvement factor of at least one.
We prove five theorems and one proposition.
\emph{Theorems~1--2} give exact necessary and sufficient conditions. Throughput is unchanged if and only if at least one bottleneck retains multiplier~$1$, and throughput strictly increases if and only if every bottleneck has multiplier strictly greater than~$1$.
\emph{Theorem~3} establishes that when a nonempty subset of stages is constrained to multiplier~$1$ the human authority constraint, throughput is bounded above by the smallest capacity among those stages, and this bound is tight under unbounded non human acceleration.
\emph{Theorem~4} proves that in a pair of independent attacker defender pipelines, the attacker defender throughput ratio worsens for the defender if and only if the attacker relative throughput gain exceeds the defender.
\emph{Theorem~5} proves that under a fixed false positive fraction model, useful throughput is constant not decreasing above the investigation capacity, establishing that a commonly asserted paradoxical decline is impossible in that model.
\emph{Proposition~6} shows that replacing the fixed fraction with a rate dependent precision function that is strictly decreasing suffices to recover the intended decline.
All proofs are elementary, using only finite minima, real number order properties, and pointwise multiplicative structure.
\end{abstract}

\section{Introduction}
\label{sec:intro}

Informal arguments about AI in cybersecurity---AI speeds up both attackers and defenders, AI cannot replace human judgement, more alerts means more noise---are circulated but not formally grounded.
They are used both to justify and to refute contrary policy positions.
The conditions under which these slogans are true or false remain invisible.

This paper provides formal ground. We model a cybersecurity pipeline as a finite totally ordered set of stages, each with a positive processing capacity.
System throughput is the minimum stage capacity---the deterministic serial bottleneck abstraction familiar from the informal Theory of Constraints~\cite{goldratt1984goal}.
AI is modelled as a stage multiplicative improvement: each stage's capacity is multiplied by a factor of at least one.
Within this framework, we prove exact conditions for when throughput changes, what ceiling human authority stages impose, how attacker and defender throughputs interact, and why a na\"ive false positive model cannot produce the paradoxical decline that informal arguments predict.

\paragraph{Informal statement of main results.}
\emph{Throughput invariance (Theorems~1--2):} Throughput is unchanged if and only if at least one original bottleneck remains unimproved.
Throughput strictly increases if and only if every original bottleneck is strictly improved.
\emph{Human authority ceiling (Theorem~3):} If some stages cannot be accelerated, throughput cannot exceed the smallest capacity among those stages, and this bound is exactly achievable.
\emph{Adversarial acceleration (Theorem~4):} The attacker defender throughput ratio worsens for the defender precisely when the attacker relative gain exceeds the defender.
\emph{False positive impossibility and repair (Theorem~5, Proposition~6):} Under a fixed false positive fraction, useful throughput plateaus rather than declines above saturation; strict decline requires a rate-dependent precision function.

\paragraph{Contributions.}
\begin{itemize}[nosep]
\item A formal pipeline model with explicit axioms, definitions, and notation for throughput, bottleneck sets, multipliers, and perturbed pipelines (\S\ref{sec:prelim}).
\item Exact biconditional characterisations of throughput invariance and strict improvement under multiplicative perturbation (Theorems~\ref{thm:invariance}--\ref{thm:strict}, \S\ref{sec:results}).
\item A tight upper bound theorem for throughput under authority constraints, with an explicit tightness construction (Theorem~\ref{thm:ceiling}, \S\ref{sec:results}).
\item An algebraic equivalence characterising when AI favours the attacker in a two pipeline adversarial model (Theorem~\ref{thm:adversarial}, \S\ref{sec:results}).
\item An impossibility result for the fixed false positive model and a repaired theorem under a rate precision function (Theorem~\ref{thm:impossibility}, Proposition~\ref{prop:repaired}, \S\ref{sec:results}).
\item A scope and limitations analysis identifying what is and is not proved (\S\ref{sec:limits}).
\end{itemize}

\paragraph{Organisation.}
Section~\ref{sec:related} surveys related work.
Section~\ref{sec:prelim} presents the formal model including domains, definitions, and assumptions.
Section~\ref{sec:results} states and proves all results in dependency order.
Section~\ref{sec:discussion} interprets the results.
Section~\ref{sec:limits} discusses limitations and future work.
Section~\ref{sec:conclusion} concludes.

\section{Related Work}
\label{sec:related}

\paragraph{Theory of Constraints.}
The informal Theory of Constraints (TOC) was introduced by Goldratt~\cite{goldratt1984goal} in the context of manufacturing.
Its central claim---that system throughput is determined by the bottleneck, and improving non bottlenecks does not improve throughput---has been adopted in operations management.
However, TOC has not been formalised as a mathematical theory with axioms, definitions, and proved theorems.
This paper provides such a formalisation for the pipeline case, proving exact necessary and sufficient conditions (Theorems~\ref{thm:invariance}--\ref{thm:strict}) rather than informal claims.

\paragraph{Serial production lines.}
The operations research literature on serial production systems~\cite{buzacott1993stochastic} analyses throughput in stochastic settings with buffers and variability.
Our model is deterministic and bufferless including a deliberate simplification that isolates the bottleneck structure from queueing effects.
The throughput formula $T(\Pi) = \min_v c(v)$ is the infinite buffer or fluid limit case of serial line models.
We do not extend to stochastic settings; see Section~\ref{sec:limits}.

\paragraph{Flow networks.}
The max flow min cut theorem~\cite{ford1956maxflow} provides a characterisation of throughput in general networks.
A pipeline is a degenerate case such as the minimum cut equals the minimum stage capacity.
Our contribution is not the throughput formula itself but the perturbation theory built on it characterising invariance, migration, ceilings, and adversarial comparison under multiplicative stage improvements.

\paragraph{Security economics.}
Anderson and Moore~\cite{anderson2001information} and Gordon and Loeb~\cite{gordon2002economics} study security investment through economic models focused on incentives and optimal spending.
Their models do not address pipeline throughput or bottleneck structure.
Our Theorem~\ref{thm:adversarial} provides a throughput ratio criterion for the attacker defender comparison, which complements economic models by identifying when throughput asymmetry favours one side.

\paragraph{Formal methods in security.}
Formal methods have been applied to security protocol verification via the Dolev--Yao model~\cite{dolev1983security}, BAN logic~\cite{burrows1989logic}, and the applied pi calculus~\cite{abadi2001mobile}.
These formalisms address secrecy and authentication properties, not pipeline throughput.
We are not aware of prior formal methods work addressing cybersecurity pipeline bottleneck analysis.

\paragraph{AI in cybersecurity.}
The empirical literature on AI in cybersecurity~\cite{buczak2016survey} documents AI applications in intrusion detection, malware analysis, and threat intelligence.
Alert fatigue and false positive burden are well documented operationally~\cite{alahmadi2022alertfatigue}.
This literature motivates our formal model but does not provide formal theorems.
Our Theorem~\ref{thm:impossibility} and Proposition~\ref{prop:repaired} formalise the false positive burden argument that this literature discusses informally.

\paragraph{Human oversight in security operations.}
The human authority constraint ($\alpha(h)=1$ for human stages) formalises concerns from the human supervisory control literature~\cite{sheridan2002humans,parasuraman1997humans}.
These works argue that certain decision stages cannot be fully automated.
Our Theorem~\ref{thm:ceiling} gives the exact throughput ceiling that follows from this constraint.

\section{Formal Preliminaries}
\label{sec:prelim}

\subsection{Domains}

We write $\Rpos := \{x \in \mathbb{R} : x > 0\}$ for the strictly positive reals and $\Rge := \{x \in \mathbb{R} : x \ge 1\}$ for the admissible multiplicative factors.
Throughout, $V$ denotes a finite nonempty set of \emph{stages}, and $\mathcal{P}(V)$ denotes its power set.

\subsection{Definitions}

\begin{definition}[Pipeline]\label{def:pipe}
A \emph{pipeline} is a tuple $\Pi = (V_\Pi, \prec_\Pi, c_\Pi)$ where
\begin{enumerate}[nosep]
\item $V_\Pi$ is a finite nonempty set (the \emph{stage set}),
\item $\prec_\Pi$ is a strict total order on $V_\Pi$ (the \emph{stage ordering}),
\item $c_\Pi \colon V_\Pi \to \Rpos$ is a function (the \emph{stage capacity function}).
\end{enumerate}
The class of all pipelines is denoted $\Pipe$.
\end{definition}

\begin{remark}
The stage ordering $\prec_\Pi$ models the serial structure of the pipeline.
However, as we note in Section~\ref{sec:discussion}, none of the proofs in this paper invoke $\prec_\Pi$; the theorems depend only on $V_\Pi$ and $c_\Pi$.
The ordering is retained for interpretive fidelity to the pipeline metaphor.
\end{remark}

\begin{definition}[Throughput]\label{def:throughput}
For $\Pi \in \Pipe$, the \emph{throughput} of $\Pi$ is
\[
T(\Pi) := \min_{v \in V_\Pi} c_\Pi(v).
\]
\end{definition}

\begin{definition}[Bottleneck set]\label{def:bottleneck}
For $\Pi \in \Pipe$, the \emph{bottleneck set} is
\[
B(\Pi) := \argmin_{v \in V_\Pi} c_\Pi(v) = \{v \in V_\Pi : c_\Pi(v) = T(\Pi)\}.
\]
The \emph{non bottleneck set} is $\NB(\Pi) := V_\Pi \setminus B(\Pi)$.
\end{definition}

\begin{example}\label{ex:pipeline}
Let $V = \{a,b,c\}$ with $a \prec b \prec c$ and $c_\Pi(a) = 3$, $c_\Pi(b) = 1$, $c_\Pi(c) = 4$.
Then $T(\Pi) = 1$, $B(\Pi) = \{b\}$, and $\NB(\Pi) = \{a,c\}$.
\end{example}

\begin{definition}[Admissible multiplier]\label{def:mult}
For a finite nonempty set $V$, an \emph{admissible multiplier} on $V$ is a function $\alpha \colon V \to \Rge$.
The set of all admissible multipliers on $V$ is denoted $\Mult(V)$.
The \emph{identity multiplier} $\mathbf{1} \in \Mult(V)$ is defined by $\mathbf{1}(v) = 1$ for all $v \in V$.
\end{definition}

\begin{definition}[Perturbed pipeline]\label{def:perturbed}
Let $\Pi \in \Pipe$ and $\alpha \in \Mult(V_\Pi)$.
The \emph{perturbed capacity function} is $c_\Pi^\alpha(v) := \alpha(v) \cdot c_\Pi(v)$ for all $v \in V_\Pi$.
The \emph{perturbed pipeline} is $\Pi^\alpha := (V_\Pi, \prec_\Pi, c_\Pi^\alpha)$.
\end{definition}

\begin{remark}
The superscript $\Pi^\alpha$ denotes perturbation, not exponentiation.
Since $\alpha(v) \ge 1$ and $c_\Pi(v) > 0$, we have $c_\Pi^\alpha(v) > 0$, so $\Pi^\alpha \in \Pipe$.
\end{remark}

\begin{definition}[Constraint migration]\label{def:migration}
Let $\Pi \in \Pipe$ and $\alpha \in \Mult(V_\Pi)$.
\emph{Constraint} occurs under $\alpha$ if $B(\Pi^\alpha) \neq B(\Pi)$.
We write $\Migr(\Pi, \alpha) \iff B(\Pi^\alpha) \neq B(\Pi)$.
\end{definition}

\subsection{Human Authority Extension}

\begin{definition}[Human authority structure]\label{def:human}
Let $\Pi \in \Pipe$.
A \emph{human authority stage set} is any subset $H \subseteq V_\Pi$.
The \emph{human-authority admissible multiplier set} is
\[
\MultH(V_\Pi) := \{\alpha \in \Mult(V_\Pi) : \forall h \in H,\; \alpha(h) = 1\}.
\]
The \emph{human ceiling value} is $C_H(\Pi) := \min_{h \in H} c_\Pi(h)$, defined when $H \neq \varnothing$.
The \emph{machine-stage set} is $M(\Pi, H) := V_\Pi \setminus H$.
\end{definition}

\subsection{Adversarial Extension}

\begin{definition}[Attacker defender pipeline pair]\label{def:adversarial}
An \emph{attacker defender pipeline pair} is $(\Pi_A, \Pi_D) \in \PipePair := \Pipe \times \Pipe$.
We write $V_A := V_{\Pi_A}$, $V_D := V_{\Pi_D}$, $c_A := c_{\Pi_A}$, $c_D := c_{\Pi_D}$.
For $\alpha_A \in \Mult(V_A)$ and $\alpha_D \in \Mult(V_D)$, the \emph{comparative throughput ratio} is
\[
R(\Pi_A, \Pi_D, \alpha_A, \alpha_D) := \frac{T(\Pi_A^{\alpha_A})}{T(\Pi_D^{\alpha_D})},
\]
and the \emph{baseline ratio} is $R_0(\Pi_A, \Pi_D) := T(\Pi_A) / T(\Pi_D)$.
\end{definition}

\subsection{False Positive Model}

This section introduces a scalar model that is formally independent of the pipeline theory above.

\begin{definition}[Simple useful throughput]\label{def:simple-u}
For $\lambda \in \Rpos$, $f \in [0,1)$, and $c_\inv \in \Rpos$, the \emph{simple useful-throughput function} is
\[
U(\lambda, f, c_\inv) := (1 - f) \min(\lambda, c_\inv).
\]
\end{definition}

\begin{definition}[Repaired useful throughput]\label{def:repaired-u}
Let $p \colon \Rpos \to [0,1]$ be a \emph{precision function} and $c_\inv \in \Rpos$.
The \emph{repaired useful-throughput function} is
\[
U_p(\lambda, c_\inv) := p(\lambda) \cdot \min(\lambda, c_\inv).
\]
\end{definition}

\subsection{Assumptions}

We collect the assumptions used across the paper.
Each theorem statement names exactly those assumptions it requires.

\begin{assumption}[Finite nonempty stage set]\label{ax:finite}
For every $\Pi \in \Pipe$, the set $V_\Pi$ is finite and nonempty.
\end{assumption}

\begin{assumption}[Strict positivity]\label{ax:positive}
For every $\Pi \in \Pipe$ and every $v \in V_\Pi$, $c_\Pi(v) > 0$.
\end{assumption}

\begin{assumption}[Throughput is the stagewise minimum]\label{ax:min}
For every $\Pi \in \Pipe$, $T(\Pi) = \min_{v \in V_\Pi} c_\Pi(v)$.
\end{assumption}

\begin{assumption}[Stage multiplicative perturbation]\label{ax:local}
For every $\Pi \in \Pipe$ and $\alpha \in \Mult(V_\Pi)$, $c_\Pi^\alpha(v) = \alpha(v) \cdot c_\Pi(v)$ for all $v \in V_\Pi$.
\end{assumption}

\begin{assumption}[Admissible multipliers are at least one]\label{ax:ge1}
For every $\Pi \in \Pipe$ and $\alpha \in \Mult(V_\Pi)$, $\alpha(v) \ge 1$ for all $v \in V_\Pi$.
\end{assumption}

\begin{assumption}[Human authority stages are unaccelerable]\label{ax:human}
In the human ceiling model, $\alpha(h) = 1$ for all $h \in H$, i.e., $\alpha \in \MultH(V_\Pi)$.
\end{assumption}

\begin{assumption}[Unbounded non human acceleration]\label{ax:unbounded}
For every $M > 1$, there exists $\alpha \in \MultH(V_\Pi)$ with $\alpha(m) \ge M$ for all $m \in V_\Pi \setminus H$.
\end{assumption}

\begin{assumption}[Independent attacker defender pipelines]\label{ax:independent}
The perturbed throughputs $T(\Pi_A^{\alpha_A})$ and $T(\Pi_D^{\alpha_D})$ are computed independently; no coupling exists between the two pipelines.
\end{assumption}

\begin{assumption}[Strictly decreasing precision]\label{ax:monop}
The precision function $p$ is strictly decreasing on $(c_\inv, \infty)$: for all $\lambda_1, \lambda_2$ with $c_\inv < \lambda_1 < \lambda_2$, $p(\lambda_1) > p(\lambda_2)$.
\end{assumption}

\section{Results}
\label{sec:results}

We present all results in the order dictated by the proof dependency graph: foundational lemmas first, then core throughput theorems, then extensions.

\subsection{Lemmas}

\begin{lemma}[Bottleneck existence]\label{lem:botexist}
Let $\Pi \in \Pipe$.
Then $T(\Pi)$ is well defined, $B(\Pi) \neq \varnothing$, and $c_\Pi(v) = T(\Pi)$ for every $v \in B(\Pi)$.
\end{lemma}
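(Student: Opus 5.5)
The plan is to use only the fact that a finite nonempty set of reals has a minimum which is attained, and then unfold Definitions~\ref{def:throughput} and~\ref{def:bottleneck}.

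\textbf{Step 1: $T(\Pi)$ is well defined.} By Definition~\ref{def:pipe} (equivalently Assumption~\ref{ax:finite}), $V_\Pi$ is finite and nonempty. Hence the image $c_\Pi(V_\Pi) \subseteq \Rpos$ is a finite nonempty set of reals. I will prove by induction on $|V_\Pi|$ that such a set has a least element.
\begin{itemize}[nosep]
\item For $|V_\Pi| = 1$ the single value is the minimum.
\item For the inductive step, remove one stage $w$. Compare $c_\Pi(w)$ with the minimum over the remaining stages, using totality of $\le$ on $\mathbb{R}$, and keep the smaller.
\end{itemize}
The induction also shows that the minimum equals $c_\Pi(v^\ast)$ for some $v^\ast \in V_\Pi$, i.e.\ it is attained. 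So $T(\Pi) = \min_{v \in V_\Pi} c_\Pi(v)$ exists and is unique, since a least element of a totally ordered set is unique. Assumption~\ref{ax:positive} additionally gives $T(\Pi) > 0$, though the lemma does not require this.

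\textbf{Step 2: $B(\Pi) \neq \varnothing$.} The witness $v^\ast$ from Step~1 satisfies $c_\Pi(v^\ast) = T(\Pi)$. By the set description in Definition~\ref{def:bottleneck}, $v^\ast \in B(\Pi)$.

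\textbf{Step 3: every bottleneck has capacity $T(\Pi)$.} This is immediate from the defining condition $B(\Pi) = \{v : c_\Pi(v) = T(\Pi)\}$. The only point needing care is that the paper writes $B(\Pi)$ both as an $\argmin$ and as this set. I will note that these coincide: $v$ minimizes $c_\Pi$ exactly when $c_\Pi(v) \le c_\Pi(u)$ for all $u$, which, given that $T(\Pi)$ is attained, holds exactly when $c_\Pi(v) = T(\Pi)$.

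There is no real obstacle here. The only substantive point is attainment of the minimum in Step~1, which relies on finiteness together with nonemptiness. It is worth stating this explicitly, because the unbounded-acceleration arguments later in the paper rely on it.
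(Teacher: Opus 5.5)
Your proof is correct and is essentially the paper's argument: finiteness and nonemptiness of $V_\Pi$ give an attained minimum of $c_\Pi$, the attaining stage witnesses $B(\Pi) \neq \varnothing$, and the final claim is read off from the set description in Definition~\ref{def:bottleneck}. The differences are cosmetic: you prove by induction that a finite nonempty set of reals has a least element, and you check that the $\argmin$ and set forms of $B(\Pi)$ agree, whereas the paper takes both facts as standard.
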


\begin{proof}
By Assumption~\ref{ax:finite}, $V_\Pi$ is finite and nonempty.
By Assumption~\ref{ax:positive}, the image $\{c_\Pi(v) : v \in V_\Pi\}$ is a finite nonempty subset of~$\Rpos$.
Every finite nonempty subset of $\mathbb{R}$ has a minimum, so $\min_{v \in V_\Pi} c_\Pi(v)$ exists; by Assumption~\ref{ax:min} this value is $T(\Pi)$.
Since the minimum is attained, there exists $v_0 \in V_\Pi$ with $c_\Pi(v_0) = T(\Pi)$, so $v_0 \in B(\Pi)$ by Definition~\ref{def:bottleneck}, giving $B(\Pi) \neq \varnothing$.
For any $v \in B(\Pi)$, $c_\Pi(v) = T(\Pi)$ holds by the definition of $B(\Pi)$.
\end{proof}

\begin{lemma}[Perturbed pipeline membership]\label{lem:membership}
Let $\Pi \in \Pipe$ and $\alpha \in \Mult(V_\Pi)$.
Then $\Pi^\alpha \in \Pipe$.
\end{lemma}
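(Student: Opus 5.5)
The plan is to check the three clauses of Definition~\ref{def:pipe} for the tuple $\Pi^\alpha = (V_\Pi, \prec_\Pi, c_\Pi^\alpha)$ from Definition~\ref{def:perturbed}, one clause at a time.

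The first two clauses come straight from $\Pi \in \Pipe$. The stage set of $\Pi^\alpha$ is the same set $V_\Pi$, which is finite and nonempty by Assumption~\ref{ax:finite}. The ordering is the same relation $\prec_\Pi$, so it is still a strict total order on $V_\Pi$.

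The third clause is the only one that needs an argument. We must show that $c_\Pi^\alpha$ is a function $V_\Pi \to \Rpos$.
\begin{itemize}[nosep]
\item By Assumption~\ref{ax:local}, $c_\Pi^\alpha(v) = \alpha(v)\cdot c_\Pi(v)$ for each $v \in V_\Pi$. This is a single real number, so $c_\Pi^\alpha$ is a well-defined function on $V_\Pi$.
\item By Assumption~\ref{ax:ge1}, $\alpha(v) \ge 1 > 0$.
\item By Assumption~\ref{ax:positive}, $c_\Pi(v) > 0$.
\item A product of two strictly positive reals is strictly positive, since $\mathbb{R}$ is an ordered field. Hence $c_\Pi^\alpha(v) > 0$, i.e.\ $c_\Pi^\alpha(v) \in \Rpos$.
\end{itemize}
This in fact gives the slightly stronger bound $c_\Pi^\alpha(v) \ge c_\Pi(v)$, although only positivity is needed here.

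There is no real obstacle; the one point to be careful about is the order of quantifiers. Assumption~\ref{ax:positive} is stated for members of $\Pipe$, so it must be applied to the original $\Pi$, not to $\Pi^\alpha$. Applying it to $\Pi^\alpha$ would assume the conclusion. With the three clauses verified, $\Pi^\alpha \in \Pipe$, which makes the remark following Definition~\ref{def:perturbed} rigorous.
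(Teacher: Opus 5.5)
Your proposal is correct and matches the paper's proof: clauses (1) and (2) are inherited from $\Pi$, and clause (3) follows from Assumption~\ref{ax:local}, from $\alpha(v)\ge 1>0$ (Assumption~\ref{ax:ge1}), from $c_\Pi(v)>0$ (Assumption~\ref{ax:positive}), and from the fact that a product of positive reals is positive. Your two side remarks are accurate extras the paper does not make: the bound $c_\Pi^\alpha(v)\ge c_\Pi(v)$, and the caution to apply Assumption~\ref{ax:positive} to $\Pi$ rather than to $\Pi^\alpha$.
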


\begin{proof}
We verify the three conditions of Definition~\ref{def:pipe} for $\Pi^\alpha = (V_\Pi, \prec_\Pi, c_\Pi^\alpha)$.
Conditions~(1) and~(2) hold because $\Pi^\alpha$ inherits $V_\Pi$ and $\prec_\Pi$ from $\Pi \in \Pipe$.
For condition~(3), let $v \in V_\Pi$.
By Assumption~\ref{ax:local}, $c_\Pi^\alpha(v) = \alpha(v) \cdot c_\Pi(v)$.
By Assumption~\ref{ax:ge1}, $\alpha(v) \ge 1 > 0$, and by Assumption~\ref{ax:positive}, $c_\Pi(v) > 0$.
The product of two positive reals is positive, so $c_\Pi^\alpha(v) > 0$.
Therefore $c_\Pi^\alpha \colon V_\Pi \to \Rpos$, and $\Pi^\alpha \in \Pipe$.
\end{proof}

\begin{lemma}[Perturbed throughput normal form]\label{lem:normalform}
Let $\Pi \in \Pipe$ and $\alpha \in \Mult(V_\Pi)$.
Then
\[
T(\Pi^\alpha) = \min_{v \in V_\Pi} \alpha(v) \cdot c_\Pi(v).
\]
\end{lemma}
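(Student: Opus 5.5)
The argument is a direct unfolding of definitions, with Lemma~\ref{lem:membership} supplying the well-definedness needed to apply the throughput formula to the perturbed pipeline. First, by Lemma~\ref{lem:membership}, $\Pi^\alpha \in \Pipe$. This is the only substantive input: it guarantees that $\Pi^\alpha$ satisfies Assumptions~\ref{ax:finite} and~\ref{ax:positive}, so Assumption~\ref{ax:min} and Definition~\ref{def:throughput} may legitimately be applied to it. Lemma~\ref{lem:botexist}, applied to $\Pi^\alpha$, then shows that the minimum defining $T(\Pi^\alpha)$ exists and is attained.

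Next I unfold the definitions. By Definition~\ref{def:perturbed}, $V_{\Pi^\alpha} = V_\Pi$ and $c_{\Pi^\alpha} = c_\Pi^\alpha$. Assumption~\ref{ax:min} then gives
\[
T(\Pi^\alpha) = \min_{v \in V_{\Pi^\alpha}} c_{\Pi^\alpha}(v) = \min_{v \in V_\Pi} c_\Pi^\alpha(v).
\]
By Assumption~\ref{ax:local}, $c_\Pi^\alpha(v) = \alpha(v)\cdot c_\Pi(v)$ pointwise on $V_\Pi$. Since the two functions agree at every point of the same finite index set, their minima coincide, and this gives the stated normal form.

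There is no genuine obstacle. The only point needing care is being explicit that the stage set of $\Pi^\alpha$ is literally $V_\Pi$, so the minimum ranges over the same index set. It is also worth noting that the result concerns an equality of minima of pointwise-equal functions on a common finite nonempty domain, which requires no order-theoretic argument beyond existence of the minimum.
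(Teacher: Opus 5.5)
Your proof is correct and matches the paper's argument step for step: Lemma~\ref{lem:membership} gives $\Pi^\alpha \in \Pipe$, Assumption~\ref{ax:min} is applied to $\Pi^\alpha$, Definition~\ref{def:perturbed} gives $V_{\Pi^\alpha} = V_\Pi$, and Assumption~\ref{ax:local} supplies the pointwise substitution. Your extra appeal to Lemma~\ref{lem:botexist} for attainment of the minimum is harmless but redundant, since the paper's proof does not use it.
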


\begin{proof}
By Lemma~\ref{lem:membership}, $\Pi^\alpha \in \Pipe$.
By Assumption~\ref{ax:min} applied to $\Pi^\alpha$, $T(\Pi^\alpha) = \min_{v \in V_\Pi} c_{\Pi^\alpha}(v)$.
By Definition~\ref{def:perturbed}, $V_{\Pi^\alpha} = V_\Pi$.
By Assumption~\ref{ax:local}, $c_{\Pi^\alpha}(v) = \alpha(v) \cdot c_\Pi(v)$.
Substituting, $T(\Pi^\alpha) = \min_{v \in V_\Pi} \alpha(v) \cdot c_\Pi(v)$.
\end{proof}

\begin{microlemma}[Monotonicity of minimum under domination]\label{ml:mono}
Let $V$ be a finite nonempty set and $f, g \colon V \to \mathbb{R}$ with $f(v) \le g(v)$ for all $v \in V$.
Then $\min_{v \in V} f(v) \le \min_{v \in V} g(v)$.
\end{microlemma}

\begin{proof}
Both minima exist because $V$ is finite and nonempty.
Let $w \in V$ achieve $\min_{v \in V} g(v)$.
Then $\min_{v \in V} f(v) \le f(w) \le g(w) = \min_{v \in V} g(v)$.
\end{proof}

\begin{proposition}[Monotonicity of throughput]\label{prop:mono}
Let $\Pi \in \Pipe$ and $\alpha, \beta \in \Mult(V_\Pi)$ with $\alpha(v) \le \beta(v)$ for all $v \in V_\Pi$.
Then $T(\Pi^\alpha) \le T(\Pi^\beta)$.
\end{proposition}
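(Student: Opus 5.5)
The plan is to reduce both throughputs to explicit minima via Lemma~\ref{lem:normalform} and then compare the two minimands pointwise using Micro Lemma~\ref{ml:mono}.

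First, apply Lemma~\ref{lem:normalform} to $\alpha$ and to $\beta$. This gives
\[
T(\Pi^\alpha) = \min_{v \in V_\Pi} \alpha(v)\, c_\Pi(v)
\quad\text{and}\quad
T(\Pi^\beta) = \min_{v \in V_\Pi} \beta(v)\, c_\Pi(v).
\]
Both minima range over the same finite nonempty set $V_\Pi$, by Assumption~\ref{ax:finite}.

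Second, define $f, g \colon V_\Pi \to \mathbb{R}$ by $f(v) := \alpha(v)\, c_\Pi(v)$ and $g(v) := \beta(v)\, c_\Pi(v)$, and check that $f \le g$ pointwise. Fix $v \in V_\Pi$. We have $\alpha(v) \le \beta(v)$ by hypothesis, and $c_\Pi(v) > 0$ by Assumption~\ref{ax:positive}. Multiplying an inequality of reals by a positive real preserves it, so $f(v) \le g(v)$.

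Third, invoke Micro Lemma~\ref{ml:mono} with these $f$ and $g$ on $V = V_\Pi$. This yields $\min_v f(v) \le \min_v g(v)$, which by the first step is exactly $T(\Pi^\alpha) \le T(\Pi^\beta)$.

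None of the steps is a real obstacle. The only point needing care is the pointwise inequality in the second step, which must explicitly invoke strict positivity of $c_\Pi(v)$ so that multiplication preserves the order. The constraint $\alpha, \beta \ge 1$ is not needed beyond guaranteeing, through Lemma~\ref{lem:membership} (used inside Lemma~\ref{lem:normalform}), that the perturbed pipelines are well defined.
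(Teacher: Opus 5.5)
Your proposal is correct and follows the paper's proof. Both establish the pointwise inequality $\alpha(v)\,c_\Pi(v) \le \beta(v)\,c_\Pi(v)$ from $c_\Pi(v) > 0$, pass to minima via Micro Lemma~\ref{ml:mono}, and identify the two minima with $T(\Pi^\alpha)$ and $T(\Pi^\beta)$ via Lemma~\ref{lem:normalform}; the only difference is that the paper invokes the normal form last rather than first.
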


\begin{proof}
For every $v \in V_\Pi$, since $c_\Pi(v) > 0$ (Assumption~\ref{ax:positive}) and $\alpha(v) \le \beta(v)$, we have $\alpha(v) \cdot c_\Pi(v) \le \beta(v) \cdot c_\Pi(v)$.
By Micro-Lemma~\ref{ml:mono}, $\min_{v} \alpha(v) \cdot c_\Pi(v) \le \min_{v} \beta(v) \cdot c_\Pi(v)$.
By Lemma~\ref{lem:normalform}, the left side is $T(\Pi^\alpha)$ and the right side is $T(\Pi^\beta)$.
\end{proof}

\begin{corollary}[Non decrease under perturbation]\label{cor:nondecrease}
Let $\Pi \in \Pipe$ and $\alpha \in \Mult(V_\Pi)$.
Then $T(\Pi^\alpha) \ge T(\Pi)$.
\end{corollary}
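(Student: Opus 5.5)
The plan is to apply Proposition~\ref{prop:mono} with the identity multiplier as the smaller of the two multipliers. First, I will check that $\mathbf{1} \in \Mult(V_\Pi)$. By Definition~\ref{def:mult}, $\mathbf{1}(v) = 1 \ge 1$, so $\mathbf{1}$ maps $V_\Pi$ into $\Rge$. Second, I will check the pointwise domination $\mathbf{1}(v) \le \alpha(v)$ for every $v \in V_\Pi$. This is exactly Assumption~\ref{ax:ge1}, or equivalently the fact that $\alpha$ takes values in $\Rge$.

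Proposition~\ref{prop:mono}, applied with $\alpha := \mathbf{1}$ and $\beta := \alpha$, then gives $T(\Pi^{\mathbf{1}}) \le T(\Pi^\alpha)$. It remains to identify $T(\Pi^{\mathbf{1}})$ with $T(\Pi)$. By Lemma~\ref{lem:normalform},
\[
T(\Pi^{\mathbf{1}}) = \min_{v \in V_\Pi} 1 \cdot c_\Pi(v) = \min_{v \in V_\Pi} c_\Pi(v),
\]
and by Assumption~\ref{ax:min} the right-hand side equals $T(\Pi)$. An alternative is to observe directly that $c_\Pi^{\mathbf{1}} = c_\Pi$, so that $\Pi^{\mathbf{1}} = \Pi$ as tuples by Definition~\ref{def:perturbed}.

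No step is a real obstacle. The only point needing care is the identification $\Pi^{\mathbf{1}} = \Pi$, or at least equality of their throughputs, which follows from the multiplicative identity $1 \cdot x = x$ in $\mathbb{R}$. If one prefers to avoid the proposition, the same conclusion follows directly from Micro-Lemma~\ref{ml:mono} applied to $f = c_\Pi$ and $g = \alpha \cdot c_\Pi$. The required domination $c_\Pi(v) \le \alpha(v)\, c_\Pi(v)$ holds because $\alpha(v) \ge 1$ and $c_\Pi(v) > 0$ (Assumption~\ref{ax:positive}).
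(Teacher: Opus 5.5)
Your proposal is correct and matches the paper's proof essentially step for step: you apply Proposition~\ref{prop:mono} with the identity multiplier $\mathbf{1}$ as the smaller multiplier (the domination comes from Assumption~\ref{ax:ge1}), then use Lemma~\ref{lem:normalform} to identify $T(\Pi^{\mathbf{1}})$ with $T(\Pi)$. Your explicit check that $\mathbf{1} \in \Mult(V_\Pi)$ and your alternative route via Micro Lemma~\ref{ml:mono} are both valid, though neither is needed.
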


\begin{proof}
By Assumption~\ref{ax:ge1}, $\alpha(v) \ge 1 = \mathbf{1}(v)$ for all $v \in V_\Pi$.
By Proposition~\ref{prop:mono}, $T(\Pi^{\mathbf{1}}) \le T(\Pi^\alpha)$.
By Lemma~\ref{lem:normalform}, $T(\Pi^{\mathbf{1}}) = \min_v 1 \cdot c_\Pi(v) = \min_v c_\Pi(v) = T(\Pi)$.
Therefore $T(\Pi) \le T(\Pi^\alpha)$.
\end{proof}

\subsection{Throughput Theorems}

\begin{microlemma}[Strict lower bound on finite minimum]\label{ml:strict}
Let $V$ be a finite nonempty set, $f \colon V \to \mathbb{R}$, and $c \in \mathbb{R}$.
If $f(v) > c$ for every $v \in V$, then $\min_{v \in V} f(v) > c$.
\end{microlemma}

\begin{proof}
Since $V$ is finite and nonempty, let $w \in V$ achieve $\min_{v \in V} f(v)$.
Then $\min_{v \in V} f(v) = f(w) > c$.
\end{proof}

\begin{theorem}[Throughput invariance characterisation]\label{thm:invariance}
Let $\Pi \in \Pipe$ and $\alpha \in \Mult(V_\Pi)$.
Then
\[
T(\Pi^\alpha) = T(\Pi) \quad\iff\quad \exists\, v \in B(\Pi) \text{ such that } \alpha(v) = 1.
\]
Equivalently, $T(\Pi^\alpha) = T(\Pi)$ if and only if $\min_{v \in B(\Pi)} \alpha(v) = 1$.
\end{theorem}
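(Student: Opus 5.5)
We prove the two directions separately, using Lemma~\ref{lem:normalform} to write $T(\Pi^\alpha) = \min_{v \in V_\Pi} \alpha(v) c_\Pi(v)$ and Corollary~\ref{cor:nondecrease} for the inequality $T(\Pi^\alpha) \ge T(\Pi)$.

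\emph{($\Leftarrow$).} Suppose $v_0 \in B(\Pi)$ has $\alpha(v_0) = 1$. By Lemma~\ref{lem:botexist}, $c_\Pi(v_0) = T(\Pi)$, so
\[
T(\Pi^\alpha) \le \alpha(v_0) c_\Pi(v_0) = T(\Pi).
\]
Combined with Corollary~\ref{cor:nondecrease}, this gives equality.

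\emph{($\Rightarrow$).} We argue by contrapositive: assume $\alpha(v) > 1$ for every $v \in B(\Pi)$, and show $\alpha(v) c_\Pi(v) > T(\Pi)$ for every $v \in V_\Pi$. This is the only step needing care, and it splits into two cases.
\begin{itemize}
\item If $v \in B(\Pi)$, then $\alpha(v) c_\Pi(v) = \alpha(v) T(\Pi) > T(\Pi)$, since $T(\Pi) > 0$ by Assumption~\ref{ax:positive}.
\item If $v \in \NB(\Pi)$, then $c_\Pi(v) > T(\Pi)$, because $c_\Pi(v) \ge T(\Pi)$ by minimality and $c_\Pi(v) \ne T(\Pi)$ by Definition~\ref{def:bottleneck}. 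Hence $\alpha(v) c_\Pi(v) \ge c_\Pi(v) > T(\Pi)$ by Assumption~\ref{ax:ge1}.
\end{itemize}
Micro-Lemma~\ref{ml:strict} then yields $T(\Pi^\alpha) > T(\Pi)$, so $T(\Pi^\alpha) \ne T(\Pi)$.

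\emph{Equivalent form.} Since $B(\Pi)$ is finite and nonempty (Lemma~\ref{lem:botexist}), the minimum $\min_{v \in B(\Pi)} \alpha(v)$ exists and is at least $1$. It equals $1$ exactly when some bottleneck has multiplier $1$: if the minimum is $1$, it is attained by some $v \in B(\Pi)$; conversely, any such $v$ gives the value $1$, which is a lower bound. This matches the existential condition in the statement.
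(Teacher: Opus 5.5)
Your proposal is correct and follows the paper's proof essentially step for step. The $(\Leftarrow)$ direction uses Lemma~\ref{lem:botexist}, the normal form, and Corollary~\ref{cor:nondecrease}; the $(\Rightarrow)$ direction is by contrapositive, with the same bottleneck/non-bottleneck case split closed off by Micro Lemma~\ref{ml:strict}, and the only cosmetic difference is that you use $\alpha(v) > 1$ directly rather than the paper's auxiliary $\delta = \min_{v \in B(\Pi)} \alpha(v)$.
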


\begin{proof}
$(\Leftarrow)$\;
Suppose there exists $v_0 \in B(\Pi)$ with $\alpha(v_0) = 1$.
By Lemma~\ref{lem:botexist}, $c_\Pi(v_0) = T(\Pi)$.
By Assumption~\ref{ax:local}, $c_\Pi^\alpha(v_0) = 1 \cdot T(\Pi) = T(\Pi)$.
By Lemma~\ref{lem:normalform}, $T(\Pi^\alpha) = \min_v \alpha(v) \cdot c_\Pi(v) \le \alpha(v_0) \cdot c_\Pi(v_0) = T(\Pi)$.
By Corollary~\ref{cor:nondecrease}, $T(\Pi^\alpha) \ge T(\Pi)$.
Therefore $T(\Pi^\alpha) = T(\Pi)$.
$(\Rightarrow)$\;
We prove the contrapositive: if $\alpha(v) > 1$ for every $v \in B(\Pi)$, then $T(\Pi^\alpha) > T(\Pi)$.
Assume $\alpha(v) > 1$ for all $v \in B(\Pi)$.
By Lemma~\ref{lem:botexist}, $B(\Pi)$ is a finite nonempty subset of $V_\Pi$.
Define $\delta := \min_{v \in B(\Pi)} \alpha(v)$.
By Micro Lemma~\ref{ml:strict} applied with $V = B(\Pi)$, $f = \alpha|_{B(\Pi)}$, and $c = 1$, we have $\delta > 1$.
We show $\alpha(v) \cdot c_\Pi(v) > T(\Pi)$ for every $v \in V_\Pi$ by considering two exhaustive cases.

\emph{Case~1: $v \in B(\Pi)$.}
By Lemma~\ref{lem:botexist}, $c_\Pi(v) = T(\Pi)$.
Since $\alpha(v) \ge \delta$ and $T(\Pi) > 0$ (because $c_\Pi(v) > 0$ for all $v$ by Assumption~\ref{ax:positive}, and $T(\Pi)$ is the minimum of finitely many positive values), we have $\alpha(v) \cdot c_\Pi(v) \ge \delta \cdot T(\Pi) > T(\Pi)$.

\emph{Case~2: $v \in V_\Pi \setminus B(\Pi)$.}
Since $v \notin B(\Pi)$, by Definition~\ref{def:bottleneck}, $c_\Pi(v) \neq T(\Pi)$.
Since $c_\Pi(v) \ge T(\Pi)$ (as $T(\Pi)$ is the minimum) and $c_\Pi(v) \neq T(\Pi)$, we have $c_\Pi(v) > T(\Pi)$.
By Assumption~\ref{ax:ge1}, $\alpha(v) \ge 1$, so $\alpha(v) \cdot c_\Pi(v) \ge c_\Pi(v) > T(\Pi)$.

Since $V_\Pi = B(\Pi) \cup (V_\Pi \setminus B(\Pi))$ and every $v \in V_\Pi$ satisfies $\alpha(v) \cdot c_\Pi(v) > T(\Pi)$, by Micro Lemma~\ref{ml:strict} applied with $V = V_\Pi$ (finite and nonempty by Assumption~\ref{ax:finite}),
\[
T(\Pi^\alpha) = \min_{v \in V_\Pi} \alpha(v) \cdot c_\Pi(v) > T(\Pi).
\]
This completes the contrapositive.

\medskip
\emph{Equivalence of formulations.}
Since $\alpha \in \Mult(V_\Pi)$, we have $\alpha(v) \ge 1$ for all $v \in B(\Pi)$ by Assumption~\ref{ax:ge1}.
Hence $\min_{v \in B(\Pi)} \alpha(v) \ge 1$.
The condition ``$\exists v \in B(\Pi), \alpha(v) = 1$'' holds iff $\min_{v \in B(\Pi)} \alpha(v) \le 1$, which combined with $\ge 1$ gives $\min_{v \in B(\Pi)} \alpha(v) = 1$.
\end{proof}

Throughput stays the same precisely when at least one original bottleneck retains its original capacity, because that stage pins the minimum at its original level while admissibility ($\alpha \ge 1$) prevents any stage from worsening.

\begin{corollary}[Non bottleneck improvement]\label{cor:nb}
Let $\Pi \in \Pipe$ and $\alpha \in \Mult(V_\Pi)$.
If $\alpha(v) = 1$ for all $v \in B(\Pi)$, then $T(\Pi^\alpha) = T(\Pi)$.
\end{corollary}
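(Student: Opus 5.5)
The plan is to derive Corollary~\ref{cor:nb} directly from the $(\Leftarrow)$ direction of Theorem~\ref{thm:invariance}. That direction needs only a single bottleneck stage carrying multiplier~$1$. The hypothesis here is stronger: \emph{every} bottleneck carries multiplier~$1$. So the whole task reduces to producing one witness in $B(\Pi)$.

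First, I will invoke Lemma~\ref{lem:botexist} to obtain $B(\Pi) \neq \varnothing$, and choose some $v_0 \in B(\Pi)$. The hypothesis $\alpha(v) = 1$ for all $v \in B(\Pi)$ then gives $\alpha(v_0) = 1$. Hence the condition ``$\exists\, v \in B(\Pi)$ with $\alpha(v) = 1$'' holds, and Theorem~\ref{thm:invariance} yields $T(\Pi^\alpha) = T(\Pi)$.

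The only step requiring care is nonemptiness of $B(\Pi)$. Without it, the universal hypothesis would be vacuous and would not supply a witness. Lemma~\ref{lem:botexist} settles this, relying on Assumptions~\ref{ax:finite} and~\ref{ax:positive}.

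The name ``non bottleneck improvement'' reflects that $\alpha$ is left completely unrestricted on $\NB(\Pi)$; this freedom plays no role in the argument. If a self-contained proof were preferred, I would instead repeat the sandwich argument. The upper bound $T(\Pi^\alpha) \le \alpha(v_0) c_\Pi(v_0) = T(\Pi)$ comes from Lemma~\ref{lem:normalform}. The lower bound $T(\Pi^\alpha) \ge T(\Pi)$ comes from Corollary~\ref{cor:nondecrease}. The reduction to Theorem~\ref{thm:invariance} is shorter, so that is the route I would take.
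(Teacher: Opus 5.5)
Your proposal is correct and matches the paper's proof exactly: you obtain a witness $v_0 \in B(\Pi)$ from the nonemptiness in Lemma~\ref{lem:botexist}, note that $\alpha(v_0)=1$, and apply the $(\Leftarrow)$ direction of Theorem~\ref{thm:invariance}. You are also right that nonemptiness of $B(\Pi)$ is the one point needing care, since without it the universal hypothesis would supply no witness.
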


\begin{proof}
By Lemma~\ref{lem:botexist}, $B(\Pi) \neq \varnothing$, so some $v_0 \in B(\Pi)$ satisfies $\alpha(v_0) = 1$.
By Theorem~\ref{thm:invariance}\;$(\Leftarrow)$, $T(\Pi^\alpha) = T(\Pi)$.
\end{proof}

\begin{theorem}[Strict throughput improvement characterisation]\label{thm:strict}
Let $\Pi \in \Pipe$ and $\alpha \in \Mult(V_\Pi)$.
Then
\[
T(\Pi^\alpha) > T(\Pi) \quad\iff\quad \forall\, v \in B(\Pi),\; \alpha(v) > 1.
\]
\end{theorem}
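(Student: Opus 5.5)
The plan is to derive Theorem~\ref{thm:strict} from Theorem~\ref{thm:invariance} together with Corollary~\ref{cor:nondecrease}, using only the trichotomy of the real order. No new minimum computation should be needed. The only bridging fact is pointwise: for each $v \in B(\Pi)$, Assumption~\ref{ax:ge1} gives $\alpha(v) \ge 1$, so exactly one of $\alpha(v) = 1$ or $\alpha(v) > 1$ holds. Consequently the statement ``$\forall v \in B(\Pi),\ \alpha(v) > 1$'' is the exact logical negation of ``$\exists v \in B(\Pi),\ \alpha(v) = 1$''.

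For $(\Rightarrow)$, I would argue by contrapositive. Suppose some $v_0 \in B(\Pi)$ has $\alpha(v_0) \le 1$; by the bridging fact, $\alpha(v_0) = 1$. Theorem~\ref{thm:invariance}\;$(\Leftarrow)$ then gives $T(\Pi^\alpha) = T(\Pi)$, which contradicts $T(\Pi^\alpha) > T(\Pi)$.

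For $(\Leftarrow)$, assume $\alpha(v) > 1$ for all $v \in B(\Pi)$. There are two routes, and I would use the first.
\begin{itemize}
\item \emph{Via the existing contrapositive.} The $(\Rightarrow)$ direction of Theorem~\ref{thm:invariance} was proved precisely as ``if $\alpha(v) > 1$ for every $v \in B(\Pi)$ then $T(\Pi^\alpha) > T(\Pi)$''. So this direction can be cited verbatim.
\item \emph{Via the statement alone.} No $v \in B(\Pi)$ has $\alpha(v) = 1$, so Theorem~\ref{thm:invariance} gives $T(\Pi^\alpha) \ne T(\Pi)$. Corollary~\ref{cor:nondecrease} gives $T(\Pi^\alpha) \ge T(\Pi)$, and these two combine to $T(\Pi^\alpha) > T(\Pi)$.
\end{itemize}
The second route is a useful fallback if one prefers to cite only the theorem as stated rather than its proof.

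There is no real obstacle. The one point needing care is that Assumption~\ref{ax:ge1} is essential to the bridging fact. Without $\alpha \ge 1$, a bottleneck could have $\alpha(v) < 1$, so the negation of ``some bottleneck has multiplier~$1$'' would not be ``all bottlenecks exceed~$1$''. Likewise the non-decrease needed in the second route would fail. I will therefore state the trichotomy step explicitly.
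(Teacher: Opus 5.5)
Your proposal is correct and essentially the paper's argument: the paper uses Corollary~\ref{cor:nondecrease} to reduce to the dichotomy $T(\Pi^\alpha) = T(\Pi)$ versus $T(\Pi^\alpha) > T(\Pi)$, negates Theorem~\ref{thm:invariance}, and applies Assumption~\ref{ax:ge1} to turn $\alpha(v) \neq 1$ into $\alpha(v) > 1$ (your fallback route for $(\Leftarrow)$ plus your $(\Rightarrow)$ argument, written by the paper as a single chain of equivalences). Your preferred route, citing the contrapositive proved inside Theorem~\ref{thm:invariance}, is also valid, though the paper's version has the small advantage of relying only on that theorem's statement.
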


\begin{proof}
By Corollary~\ref{cor:nondecrease}, $T(\Pi^\alpha) \ge T(\Pi)$, so exactly one of $T(\Pi^\alpha) = T(\Pi)$ or $T(\Pi^\alpha) > T(\Pi)$ holds.
By Theorem~\ref{thm:invariance}, $T(\Pi^\alpha) = T(\Pi) \iff \exists v \in B(\Pi), \alpha(v) = 1$.
Negating: $T(\Pi^\alpha) > T(\Pi) \iff \forall v \in B(\Pi), \alpha(v) \neq 1$.
By Assumption~\ref{ax:ge1}, $\alpha(v) \ge 1$, so $\alpha(v) \neq 1$ is equivalent to $\alpha(v) > 1$.
\end{proof}

Throughput strictly increases if and only if every original bottleneck is strictly improved.
In a system with tied bottlenecks, \emph{all} must be accelerated improving all but one leaves throughput unchanged.

\begin{proposition}[Bottleneck preservation]\label{prop:botpreserve}
Let $\Pi \in \Pipe$ and $\alpha \in \Mult(V_\Pi)$.
Then $B(\Pi^\alpha) = B(\Pi)$ if and only if:
\begin{enumerate}[nosep,label=(\roman*)]
\item $\alpha(u) = \alpha(u')$ for all $u, u' \in B(\Pi)$; and
\item $\alpha(u) \cdot c_\Pi(u) < \alpha(w) \cdot c_\Pi(w)$ for all $u \in B(\Pi)$ and $w \in V_\Pi \setminus B(\Pi)$.
\end{enumerate}
When $B(\Pi) = V_\Pi$, condition~(ii) is vacuously true.
\end{proposition}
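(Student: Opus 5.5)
We prove both directions directly from the definition of the bottleneck set (Definition~\ref{def:bottleneck}) applied to $\Pi^\alpha$, which is a pipeline by Lemma~\ref{lem:membership}, together with the normal form $T(\Pi^\alpha) = \min_{v \in V_\Pi} \alpha(v)\, c_\Pi(v)$ of Lemma~\ref{lem:normalform}. Throughout we write $\kappa := T(\Pi^\alpha)$. By Lemma~\ref{lem:botexist}, every $u \in B(\Pi)$ has $c_\Pi(u) = T(\Pi)$, and $T(\Pi) > 0$ by Assumption~\ref{ax:positive}.

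$(\Rightarrow)$ Assume $B(\Pi^\alpha) = B(\Pi)$.

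For (i), take $u, u' \in B(\Pi) = B(\Pi^\alpha)$. By Lemma~\ref{lem:botexist} applied to $\Pi^\alpha$, we get $\alpha(u) c_\Pi(u) = \kappa = \alpha(u') c_\Pi(u')$. Substituting $c_\Pi(u) = c_\Pi(u') = T(\Pi)$ gives $\alpha(u) T(\Pi) = \alpha(u') T(\Pi)$. Dividing by $T(\Pi) > 0$ yields $\alpha(u) = \alpha(u')$.

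For (ii), take $u \in B(\Pi)$ and $w \in V_\Pi \setminus B(\Pi) = V_\Pi \setminus B(\Pi^\alpha)$. Since $\kappa$ is the minimum, $\alpha(w) c_\Pi(w) \ge \kappa$. Since $w \notin B(\Pi^\alpha)$, this value is not equal to $\kappa$, so $\alpha(w) c_\Pi(w) > \kappa$. This is the same strict-inequality argument as Case~2 of Theorem~\ref{thm:invariance}. Finally $\kappa = \alpha(u) c_\Pi(u)$ because $u \in B(\Pi^\alpha)$, which gives (ii).

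$(\Leftarrow)$ Assume (i) and (ii). Since $B(\Pi) \neq \varnothing$ (Lemma~\ref{lem:botexist}), (i) lets us define the common value $\gamma := \alpha(u)$ for $u \in B(\Pi)$. Set $\mu := \gamma\, T(\Pi)$. Then every $u \in B(\Pi)$ has perturbed capacity exactly $\mu$, and by (ii) every $w \notin B(\Pi)$ has perturbed capacity strictly greater than $\mu$.

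It follows that $\mu$ is attained and is a lower bound, so $\kappa = \mu$. The set $\{v : \alpha(v) c_\Pi(v) = \mu\}$ therefore contains all of $B(\Pi)$ and excludes every $w \notin B(\Pi)$, so $B(\Pi^\alpha) = B(\Pi)$.

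When $B(\Pi) = V_\Pi$, the set $V_\Pi \setminus B(\Pi)$ is empty, so (ii) quantifies over nothing and is vacuously true.

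There is no real obstacle in this argument. The only step needing care is the forward direction of (i): it requires the cancellation by $T(\Pi)$, which is legitimate only because $T(\Pi) > 0$ under Assumption~\ref{ax:positive}. Without that assumption the equality of multipliers could fail, and this is where strict positivity is genuinely used.
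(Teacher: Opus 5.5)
Your proposal is correct and follows essentially the same argument as the paper. For (i) you cancel the common positive capacity $T(\Pi)$ of the bottleneck stages, for (ii) you use that non-membership in $B(\Pi^\alpha)$ forces a strict inequality, and for the converse you show the common value $\gamma\,T(\Pi)$ is the attained minimum whose minimisers are exactly $B(\Pi)$.
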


\begin{proof}
Let $T := T(\Pi)$ and $T' := T(\Pi^\alpha) = \min_{v} \alpha(v) \cdot c_\Pi(v)$ (Lemma~\ref{lem:normalform}).
$(\Leftarrow)$\;
Let $\alpha_0$ be the common value of $\alpha$ on $B(\Pi)$.
For $u \in B(\Pi)$: $\alpha(u) \cdot c_\Pi(u) = \alpha_0 T$.
For $w \notin B(\Pi)$: $\alpha_0 T < \alpha(w) \cdot c_\Pi(w)$ by~(ii).
Hence $T' = \alpha_0 T$ and the minimisers are exactly $B(\Pi)$, so $B(\Pi^\alpha) = B(\Pi)$.
$(\Rightarrow)$\;
If $B(\Pi^\alpha) = B(\Pi)$, then for every $u \in B(\Pi)$, $\alpha(u) \cdot c_\Pi(u) = T'$.
Since $c_\Pi(u) = T > 0$ for $u \in B(\Pi)$ (Lemma~\ref{lem:botexist}), $\alpha(u) = T'/T$ is the same for all $u$, giving~(i).
For $w \notin B(\Pi)$, since $w \notin B(\Pi^\alpha)$, $\alpha(w) \cdot c_\Pi(w) > T' = \alpha(u) \cdot c_\Pi(u)$, giving~(ii).
\end{proof}

\begin{proposition}[Migration]\label{prop:migration}
Let $\Pi \in \Pipe$ and $\alpha \in \Mult(V_\Pi)$.
Then
\[
\Migr(\Pi, \alpha) \;\iff\; \bigl(\exists v \in B(\Pi) : v \notin B(\Pi^\alpha)\bigr) \;\lor\; \bigl(\exists w \in V_\Pi \setminus B(\Pi) : w \in B(\Pi^\alpha)\bigr).
\]
\end{proposition}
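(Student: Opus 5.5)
The plan is to reduce the statement to elementary set extensionality. By Definition~\ref{def:migration}, $\Migr(\Pi,\alpha)$ holds precisely when $B(\Pi^\alpha) \neq B(\Pi)$. Both sets are subsets of the common stage set $V_\Pi$: $B(\Pi) \subseteq V_\Pi$ by Definition~\ref{def:bottleneck}, and $B(\Pi^\alpha) \subseteq V_{\Pi^\alpha} = V_\Pi$ by Definition~\ref{def:perturbed}. Lemma~\ref{lem:membership} guarantees $\Pi^\alpha \in \Pipe$, so $B(\Pi^\alpha)$ is well defined (and nonempty by Lemma~\ref{lem:botexist}, though nonemptiness is not needed).

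I then use the fact that two subsets $X, Y$ of a common universe are unequal iff $X \not\subseteq Y$ or $Y \not\subseteq X$. I take $X = B(\Pi)$ and $Y = B(\Pi^\alpha)$.
\begin{itemize}[nosep]
\item The failure $B(\Pi) \not\subseteq B(\Pi^\alpha)$ is literally the first disjunct: some $v \in B(\Pi)$ has $v \notin B(\Pi^\alpha)$.
\item The failure $B(\Pi^\alpha) \not\subseteq B(\Pi)$ means some $w \in B(\Pi^\alpha)$ has $w \notin B(\Pi)$.
\end{itemize}
Since $w \in V_\Pi$, the condition $w \notin B(\Pi)$ is equivalent to $w \in V_\Pi \setminus B(\Pi)$. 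This gives the second disjunct.

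Both directions follow from this. For $(\Leftarrow)$, either witness immediately separates the two sets. For $(\Rightarrow)$, if neither disjunct held, the two inclusions would hold, and antisymmetry of $\subseteq$ would give equality, contradicting $\Migr(\Pi,\alpha)$.

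There is no real obstacle here. The only point needing care is to note explicitly that $V_{\Pi^\alpha} = V_\Pi$, so that membership in $V_\Pi \setminus B(\Pi)$ is the correct complement for elements of $B(\Pi^\alpha)$. No assumption about capacities or multipliers beyond well-definedness is used, and in particular Proposition~\ref{prop:botpreserve} is not needed.
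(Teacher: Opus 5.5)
Your proof is correct and takes the same route as the paper: both reduce $\Migr(\Pi,\alpha)$ to $B(\Pi^\alpha)\neq B(\Pi)$ and apply set extensionality, i.e.\ a witness lying in one set but not the other. Both also use $B(\Pi^\alpha)\subseteq V_{\Pi^\alpha}=V_\Pi$ to rewrite the second witness as an element of $V_\Pi\setminus B(\Pi)$.
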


\begin{proof}
By Definition~\ref{def:migration}, $\Migr(\Pi,\alpha) \iff B(\Pi^\alpha) \neq B(\Pi)$.
For any two sets $A, B$: $A \neq B \iff (\exists x \in A \setminus B) \lor (\exists x \in B \setminus A)$.
Applying this with $A = B(\Pi)$ and $B = B(\Pi^\alpha)$ and noting $B(\Pi^\alpha) \subseteq V_\Pi$ gives the result.
\end{proof}

\subsection{Human Authority Ceiling}

\begin{theorem}[Human authority ceiling and tightness]\label{thm:ceiling}
Let $\Pi \in \Pipe$ and $H \subseteq V_\Pi$ with $H \neq \varnothing$.

\textup{(a)}\; For every $\alpha \in \MultH(V_\Pi)$,
\[
T(\Pi^\alpha) \le C_H(\Pi) = \min_{h \in H} c_\Pi(h).
\]

\textup{(b)}\; Under Assumption~\ref{ax:unbounded},
\[
\sup_{\alpha \in \MultH(V_\Pi)} T(\Pi^\alpha) = C_H(\Pi).
\]
\end{theorem}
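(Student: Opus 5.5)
For part~(a) I would argue directly from the perturbed throughput normal form (Lemma~\ref{lem:normalform}). Fix $\alpha \in \MultH(V_\Pi)$. Since $H$ is finite and nonempty, choose $h_0 \in H$ with $c_\Pi(h_0) = C_H(\Pi)$. Because $\alpha(h_0) = 1$ (Assumption~\ref{ax:human}), we get
\[
T(\Pi^\alpha) = \min_{v \in V_\Pi} \alpha(v)\, c_\Pi(v) \le \alpha(h_0)\, c_\Pi(h_0) = C_H(\Pi).
\]
This gives~(a).

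For part~(b), part~(a) already makes $C_H(\Pi)$ an upper bound. So it suffices to exhibit a single $\alpha \in \MultH(V_\Pi)$ with $T(\Pi^\alpha) = C_H(\Pi)$; the supremum is then attained, and is in fact a maximum. I would set
\[
M := \max\Bigl(2,\; \frac{C_H(\Pi)}{\min_{v \in V_\Pi} c_\Pi(v)}\Bigr),
\]
which is well defined and finite by Lemma~\ref{lem:botexist} and Assumption~\ref{ax:positive}, and satisfies $M > 1$. Assumption~\ref{ax:unbounded} then provides $\alpha \in \MultH(V_\Pi)$ with $\alpha(m) \ge M$ for every $m \in V_\Pi \setminus H$. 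Alternatively, one could define $\alpha$ explicitly as $1$ on $H$ and $M$ off $H$; this shows the assumption is really only needed for interpretation. I would note this but still cite the assumption, as the statement does.

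I then split the minimum over $V_\Pi$ into the two cases $v \in H$ and $v \notin H$.
\begin{itemize}[nosep]
\item For $h \in H$: $\alpha(h)\, c_\Pi(h) = c_\Pi(h) \ge C_H(\Pi)$, with equality at $h_0$.
\item For $m \notin H$: $\alpha(m)\, c_\Pi(m) \ge M \cdot \min_v c_\Pi(v) \ge C_H(\Pi)$.
\end{itemize}
Hence every term is at least $C_H(\Pi)$, and the term at $h_0$ equals it. By Micro Lemma~\ref{ml:mono}, or directly from attainment of the minimum, $T(\Pi^\alpha) = C_H(\Pi)$. Combined with~(a), this gives $\sup_{\alpha \in \MultH(V_\Pi)} T(\Pi^\alpha) = C_H(\Pi)$. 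The degenerate case $H = V_\Pi$ is harmless: the second case is vacuous, and then $\alpha = \mathbf{1}$ and $C_H(\Pi) = T(\Pi)$.

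The only real obstacle is choosing a threshold $M$ large enough that no machine stage falls below the human ceiling, while keeping $M > 1$ so that Assumption~\ref{ax:unbounded} applies. Taking the ratio to the global minimum capacity, with the maximum against $2$, settles both requirements at once.
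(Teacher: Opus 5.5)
Your proof is correct and follows the paper's argument: part (a) goes through the normal form and a human stage with multiplier $1$, and part (b) uses Assumption~\ref{ax:unbounded} to push every machine stage to at least $C_H(\Pi)$, so the minimum is pinned at $C_H(\Pi)$, then combines this with (a). The differences are minor: your threshold uses the global minimum $\min_{v} c_\Pi(v)$ instead of the paper's minimum over $V_\Pi \setminus H$, which lets you skip the paper's separate case $H = V_\Pi$, and your remark that the required $\alpha$ can be written down explicitly (so Assumption~\ref{ax:unbounded} follows from the definitions) is also correct.
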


\begin{proof}
\emph{Part~(a).}\;
By Lemma~\ref{lem:normalform}, $T(\Pi^\alpha) = \min_{v \in V_\Pi} \alpha(v) \cdot c_\Pi(v)$.
Since $H \subseteq V_\Pi$ and $H \neq \varnothing$, $\min_{v \in V_\Pi} \alpha(v) \cdot c_\Pi(v) \le \min_{h \in H} \alpha(h) \cdot c_\Pi(h)$.
By Assumption~\ref{ax:human}, $\alpha(h) = 1$ for $h \in H$, so $\min_{h \in H} \alpha(h) \cdot c_\Pi(h) = \min_{h \in H} c_\Pi(h) = C_H(\Pi)$.
\emph{Part~(b).}\;
\emph{Case~1: $H = V_\Pi$.}\;
Then $\MultH(V_\Pi) = \{\mathbf{1}\}$, so $\sup T(\Pi^\alpha) = T(\Pi) = \min_{v \in V_\Pi} c_\Pi(v) = C_H(\Pi)$.
\emph{Case~2: $M := V_\Pi \setminus H \neq \varnothing$.}\;
Define $c_{\min}^M := \min_{m \in M} c_\Pi(m) > 0$ (exists by Assumption~\ref{ax:finite} and Assumption~\ref{ax:positive}).
Let $N := \lceil C_H(\Pi) / c_{\min}^M \rceil + 1$, so $N \cdot c_{\min}^M > C_H(\Pi)$.
By Assumption~\ref{ax:unbounded}, there exists $\alpha^* \in \MultH(V_\Pi)$ with $\alpha^*(m) \ge N$ for all $m \in M$.
By Lemma~\ref{lem:normalform},
\[
T(\Pi^{\alpha^*}) = \min\!\Bigl(\min_{h \in H} c_\Pi(h),\; \min_{m \in M} \alpha^*(m) \cdot c_\Pi(m)\Bigr) = \min\!\bigl(C_H(\Pi),\; \min_{m \in M} \alpha^*(m) \cdot c_\Pi(m)\bigr).
\]
For every $m \in M$: $\alpha^*(m) \cdot c_\Pi(m) \ge N \cdot c_{\min}^M > C_H(\Pi)$.
By Micro Lemma~\ref{ml:strict}, $\min_{m \in M} \alpha^*(m) \cdot c_\Pi(m) > C_H(\Pi)$.
Therefore $T(\Pi^{\alpha^*}) = C_H(\Pi)$, giving $\sup T(\Pi^\alpha) \ge C_H(\Pi)$.
Combined with Part~(a), $\sup T(\Pi^\alpha) = C_H(\Pi)$.
\end{proof}

\subsection{Adversarial Comparative Statics}

\begin{theorem}[Adversarial relative-acceleration]\label{thm:adversarial}
Let $(\Pi_A, \Pi_D) \in \PipePair$, $\alpha_A \in \Mult(V_A)$, $\alpha_D \in \Mult(V_D)$.
Then
\[
R(\Pi_A, \Pi_D, \alpha_A, \alpha_D) > R_0(\Pi_A, \Pi_D) \quad\iff\quad \frac{T(\Pi_A^{\alpha_A})}{T(\Pi_A)} > \frac{T(\Pi_D^{\alpha_D})}{T(\Pi_D)}.
\]
\end{theorem}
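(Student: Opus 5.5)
The plan is to reduce the claimed equivalence to a single algebraic rearrangement of positive reals. First, set up notation $a := T(\Pi_A)$, $a' := T(\Pi_A^{\alpha_A})$, $d := T(\Pi_D)$, $d' := T(\Pi_D^{\alpha_D})$, and confirm all four are well defined and strictly positive. For $a$ and $d$ this follows from Lemma~\ref{lem:botexist} together with Assumption~\ref{ax:positive}: each is attained as the capacity of some stage, hence lies in $\Rpos$. For $a'$ and $d'$, Lemma~\ref{lem:membership} gives $\Pi_A^{\alpha_A}, \Pi_D^{\alpha_D} \in \Pipe$, so the same argument applies. Assumption~\ref{ax:independent} is invoked only to justify treating $a'$ and $d'$ as separately computed quantities, each given by Lemma~\ref{lem:normalform} on its own stage set. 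No cross-terms arise.

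With positivity in hand, unfold Definition~\ref{def:adversarial}. The left-hand inequality reads $a'/d' > a/d$. Multiply both sides by the positive quantity $d'/a$. Multiplication by a positive real preserves and reflects strict inequality, so this is equivalent to $a'/a > d'/d$, which is exactly the right-hand side. Equivalently, one can cross-multiply once to reach the symmetric form $a' d > a d'$. Both sides of the biconditional reduce to this same inequality: divide by $a d'$ for the left, and by $a d$ for the right. I would present the argument as this chain of equivalences, so that both directions are obtained at once.

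The main obstacle is only bookkeeping. Each division must be by a quantity already shown to be strictly positive, since otherwise the direction of the inequality could flip or the ratios could be undefined. The positivity step is therefore the part to state carefully.

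As a remark, Corollary~\ref{cor:nondecrease} shows both gain ratios are at least $1$. This is not needed for the equivalence, but it is worth noting for interpretation.
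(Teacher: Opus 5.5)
Your proposal is correct and matches the paper's proof: you show that all four throughputs are strictly positive, then multiply $R > R_0$ by the positive quantity $T(\Pi_D^{\alpha_D})/T(\Pi_A)$, and both directions follow because multiplication by a positive real preserves and reflects strict inequality. One small slip, in your alternative cross-multiplied route only: to recover the left-hand side $a'/d' > a/d$ from $a'd > ad'$ you should divide by $dd'$, not by $ad'$ (dividing by $ad'$ gives $R/R_0 > 1$, which is equivalent since $R_0 > 0$, but is not literally the stated form).
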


\begin{proof}
All four throughputs $T(\Pi_A)$, $T(\Pi_D)$, $T(\Pi_A^{\alpha_A})$, $T(\Pi_D^{\alpha_D})$ are strictly positive (each is the minimum of finitely many products of positive reals, by Assumptions~\ref{ax:positive} and~\ref{ax:ge1}).
By the definitions of $R$ and $R_0$,
\[
R > R_0 \;\iff\; \frac{T(\Pi_A^{\alpha_A})}{T(\Pi_D^{\alpha_D})} > \frac{T(\Pi_A)}{T(\Pi_D)}.
\]
Multiplying both sides by $T(\Pi_D^{\alpha_D}) / T(\Pi_A) > 0$ gives
\[
\frac{T(\Pi_A^{\alpha_A})}{T(\Pi_A)} > \frac{T(\Pi_D^{\alpha_D})}{T(\Pi_D)}.
\]
\end{proof}

The attacker defender ratio worsens for the defender precisely when the attacker's relative throughput gain exceeds the defender's.

\begin{corollary}[Defender misses bottleneck]\label{cor:defmiss}
If $\alpha_A(v) > 1$ for every $v \in B(\Pi_A)$ and there exists $w \in B(\Pi_D)$ with $\alpha_D(w) = 1$, then $R(\Pi_A, \Pi_D, \alpha_A, \alpha_D) > R_0(\Pi_A, \Pi_D)$.
\end{corollary}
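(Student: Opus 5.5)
The plan is to reduce the corollary to Theorem~\ref{thm:adversarial}, which turns the claim into a comparison of the two relative gains. By that theorem, it suffices to show
\[
\frac{T(\Pi_A^{\alpha_A})}{T(\Pi_A)} > \frac{T(\Pi_D^{\alpha_D})}{T(\Pi_D)}.
\]
I would bound the left side strictly above $1$ and show the right side equals $1$ exactly.

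\emph{Attacker side.} The hypothesis $\alpha_A(v) > 1$ for every $v \in B(\Pi_A)$ is exactly the right-hand condition of Theorem~\ref{thm:strict} applied to $\Pi_A$ and $\alpha_A$. That theorem gives $T(\Pi_A^{\alpha_A}) > T(\Pi_A)$. Since $T(\Pi_A) > 0$, which follows from Lemma~\ref{lem:botexist} and Assumption~\ref{ax:positive}, dividing yields $T(\Pi_A^{\alpha_A})/T(\Pi_A) > 1$.

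\emph{Defender side.} The hypothesis that some $w \in B(\Pi_D)$ has $\alpha_D(w) = 1$ is the right-hand condition of Theorem~\ref{thm:invariance}. Its $(\Leftarrow)$ direction gives $T(\Pi_D^{\alpha_D}) = T(\Pi_D)$. Because $T(\Pi_D) > 0$, the defender's ratio is exactly $1$.

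Chaining the two sides gives left side $> 1 =$ right side. Theorem~\ref{thm:adversarial} then yields $R(\Pi_A,\Pi_D,\alpha_A,\alpha_D) > R_0(\Pi_A,\Pi_D)$.

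There is no real obstacle. The only points needing care are that the multipliers are admissible, i.e.\ $\alpha_A \in \Mult(V_A)$ and $\alpha_D \in \Mult(V_D)$ as implicit in the statement, so that both theorems apply. The divisions also need $T(\Pi_A)$ and $T(\Pi_D)$ to be strictly positive, which holds because each is a minimum of finitely many positive reals.
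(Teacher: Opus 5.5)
Your proposal is correct and matches the paper's proof step for step: Theorem~\ref{thm:strict} makes the attacker's relative gain exceed $1$, Theorem~\ref{thm:invariance} $(\Leftarrow)$ makes the defender's equal $1$, and Theorem~\ref{thm:adversarial} turns this into $R > R_0$. Your explicit check that $T(\Pi_A), T(\Pi_D) > 0$ before dividing is a detail the paper leaves implicit in the corollary; it establishes this positivity inside the proof of Theorem~\ref{thm:adversarial}.
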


\begin{proof}
By Theorem~\ref{thm:strict}, $T(\Pi_A^{\alpha_A}) > T(\Pi_A)$, so $T(\Pi_A^{\alpha_A})/T(\Pi_A) > 1$.
By Theorem~\ref{thm:invariance}, $T(\Pi_D^{\alpha_D}) = T(\Pi_D)$, so $T(\Pi_D^{\alpha_D})/T(\Pi_D) = 1$.
By Theorem~\ref{thm:adversarial}, $R > R_0$.
\end{proof}

\subsection{False Positive Model: Audit and Repair}

\begin{theorem}[Impossibility of post saturation decline]\label{thm:impossibility}
Let $f \in [0,1)$ and $c_\inv \in \Rpos$.
For every $\lambda_1, \lambda_2 > c_\inv$,
\[
U(\lambda_1, f, c_\inv) = U(\lambda_2, f, c_\inv) = (1-f) \cdot c_\inv.
\]
Consequently, $\lambda \mapsto U(\lambda, f, c_\inv)$ is constant on $(c_\inv, \infty)$.
\end{theorem}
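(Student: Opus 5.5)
The plan is to evaluate $U$ directly from Definition~\ref{def:simple-u}; no earlier lemma is needed beyond the elementary properties of $\min$ on two reals. First, fix an arbitrary $\lambda > c_\inv$. Since $c_\inv < \lambda$, the two-element minimum gives $\min(\lambda, c_\inv) = c_\inv$. Substituting into the definition yields $U(\lambda, f, c_\inv) = (1-f)\cdot c_\inv$. This expression does not involve $\lambda$.

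Second, apply this computation separately to $\lambda_1$ and to $\lambda_2$. Both satisfy the hypothesis $\lambda_i > c_\inv$, so both values equal $(1-f)\cdot c_\inv$, and the chain of equalities in the statement follows by transitivity. The hypothesis $f \in [0,1)$ plays no role in the equality itself. It only guarantees that the common value is strictly positive, which I would remark on in passing, since this makes clear that the plateau is not the trivial zero function.

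Third, for the ``consequently'' clause, observe that any two points of $(c_\inv, \infty)$ give the same value. That is exactly the definition of a constant function on that interval; equivalently, the map is the constant $(1-f)c_\inv$ there. From this, the failure of any strict decline follows immediately: for $c_\inv < \lambda_1 < \lambda_2$ we have $U(\lambda_1,f,c_\inv) = U(\lambda_2,f,c_\inv)$, not $>$. I would state this as the interpretive payoff.

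There is no real obstacle here. The only point requiring care is to use the strict inequality $\lambda > c_\inv$ to resolve the minimum to $c_\inv$; the boundary $\lambda = c_\inv$ would also work, but it is excluded by the statement. I would also note explicitly that $f$ is held fixed, independent of $\lambda$. That independence is precisely the modelling feature that Proposition~\ref{prop:repaired} later relaxes.
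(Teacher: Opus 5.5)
Your proposal is correct and follows the paper's proof: for $\lambda > c_\inv$ you resolve $\min(\lambda, c_\inv) = c_\inv$, substitute into Definition~\ref{def:simple-u} to get $(1-f)\cdot c_\inv$, and note that this value does not depend on $\lambda$. Your side remarks are also accurate: $f<1$ only makes the plateau value strictly positive, and the boundary point $\lambda = c_\inv$ would give the same value.
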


\begin{proof}
Let $\lambda > c_\inv$.
Since $\lambda > c_\inv$, $\min(\lambda, c_\inv) = c_\inv$.
By Definition~\ref{def:simple-u}, $U(\lambda, f, c_\inv) = (1-f) \cdot c_\inv$.
This value does not depend on $\lambda$.
\end{proof}

Under the fixed false positive fraction model, useful throughput saturates at $(1-f) c_\inv$ once the alert rate exceeds investigation capacity.
There is no decline only a plateau.
Any argument asserting strict decline in this model is incorrect.

\begin{proposition}[Repaired false-positive burden]\label{prop:repaired}
Let $p \colon \Rpos \to [0,1]$ satisfy Assumption~\ref{ax:monop}, and let $c_\inv \in \Rpos$.
Then $U_p$ is strictly decreasing on $(c_\inv, \infty)$: for all $\lambda_1, \lambda_2$ with $c_\inv < \lambda_1 < \lambda_2$,
\[
U_p(\lambda_1, c_\inv) > U_p(\lambda_2, c_\inv).
\]
\end{proposition}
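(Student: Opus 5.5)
The plan is to reduce $U_p$ on the saturated region to a positive scalar multiple of $p$, then apply Assumption~\ref{ax:monop} directly. This mirrors the proof of Theorem~\ref{thm:impossibility}, with the fixed factor $(1-f)$ replaced by the rate-dependent factor $p(\lambda)$.

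First, fix $\lambda_1, \lambda_2$ with $c_\inv < \lambda_1 < \lambda_2$. Since both exceed $c_\inv$, we have $\min(\lambda_i, c_\inv) = c_\inv$ for $i = 1,2$. By Definition~\ref{def:repaired-u} this gives
\[
U_p(\lambda_i, c_\inv) = p(\lambda_i)\cdot c_\inv \quad (i=1,2).
\]
Next, Assumption~\ref{ax:monop} applied to this pair yields $p(\lambda_1) > p(\lambda_2)$. Multiplying this strict inequality by $c_\inv$, which is strictly positive because $c_\inv \in \Rpos$, preserves strictness. Hence $p(\lambda_1) c_\inv > p(\lambda_2) c_\inv$, which is exactly $U_p(\lambda_1, c_\inv) > U_p(\lambda_2, c_\inv)$.

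There is no substantive obstacle here. The only points needing care are that positivity of $c_\inv$ is what makes multiplication preserve the strict order, and that the strict inequalities $\lambda_i > c_\inv$ put both points inside the region where the minimum equals $c_\inv$, so the assumption applies. The codomain $[0,1]$ of $p$ is not needed for the argument.
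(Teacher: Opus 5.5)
Your proof is correct and matches the paper's argument step for step. You reduce $U_p(\lambda_i, c_\inv)$ to $p(\lambda_i)\cdot c_\inv$ on the saturated region, invoke Assumption~\ref{ax:monop} to get $p(\lambda_1) > p(\lambda_2)$, and multiply by $c_\inv > 0$ to keep the inequality strict.
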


\begin{proof}
Let $c_\inv < \lambda_1 < \lambda_2$.
Since both exceed $c_\inv$, $\min(\lambda_i, c_\inv) = c_\inv$ for $i = 1,2$.
By Definition~\ref{def:repaired-u}, $U_p(\lambda_i, c_\inv) = p(\lambda_i) \cdot c_\inv$.
By Assumption~\ref{ax:monop}, $p(\lambda_1) > p(\lambda_2)$.
Since $c_\inv > 0$ (by hypothesis), $p(\lambda_1) \cdot c_\inv > p(\lambda_2) \cdot c_\inv$.
\end{proof}

When precision degrades with increasing alert rate, useful throughput genuinely declines beyond saturation.
The declining precision factor overwhelms the capacity.

\begin{corollary}[Constant precision collapses to plateau]\label{cor:const}
If $p(\lambda) = f_0$ for all $\lambda$ (constant precision), then $U_p(\lambda, c_\inv) = f_0 \cdot c_\inv$ for all $\lambda > c_\inv$.
In particular, $U_p$ is constant, not decreasing, on $(c_\inv, \infty)$.
\end{corollary}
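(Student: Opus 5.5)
The claim is a direct substitution into Definition~\ref{def:repaired-u}, so the plan is simply to unfold the definition pointwise. I will fix an arbitrary $\lambda > c_\inv$ and first observe that $\min(\lambda, c_\inv) = c_\inv$, exactly as in the proof of Theorem~\ref{thm:impossibility}. Substituting into Definition~\ref{def:repaired-u} gives $U_p(\lambda, c_\inv) = p(\lambda)\cdot c_\inv$. Using the hypothesis $p(\lambda) = f_0$, this becomes $U_p(\lambda, c_\inv) = f_0 \cdot c_\inv$. Since $\lambda > c_\inv$ was arbitrary, the first assertion follows.

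For the ``in particular'' clause, the value $f_0 \cdot c_\inv$ does not involve $\lambda$. Hence for any $c_\inv < \lambda_1 < \lambda_2$ we have $U_p(\lambda_1, c_\inv) = U_p(\lambda_2, c_\inv)$. This equality rules out the strict inequality $U_p(\lambda_1,c_\inv) > U_p(\lambda_2,c_\inv)$, so $U_p$ is constant and not strictly decreasing on $(c_\inv,\infty)$. I will also note the consistency check: a constant $p$ violates Assumption~\ref{ax:monop}, so Proposition~\ref{prop:repaired} does not apply to it. This is why its conclusion fails here, and it shows the strict-decrease hypothesis is essential.

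There is no real obstacle. The only point needing care is that $f_0$ must lie in $[0,1]$ for $p$ to be a legitimate precision function; this is implicit in the statement and is not needed for the computation itself. I would also remark that taking $f_0 = 1-f$ recovers exactly the plateau value $(1-f)c_\inv$ of Theorem~\ref{thm:impossibility}. This identifies the simple model as the constant-precision special case of the repaired model.
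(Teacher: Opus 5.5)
Your proposal is correct and follows the paper's proof: for $\lambda > c_\inv$ you reduce $\min(\lambda, c_\inv)$ to $c_\inv$ and substitute $p(\lambda) = f_0$ into Definition~\ref{def:repaired-u}. Your extra remarks are accurate but not needed for the result: constant $p$ falls outside Assumption~\ref{ax:monop}, and $f_0 = 1-f$ recovers Theorem~\ref{thm:impossibility}.
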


\begin{proof}
For $\lambda > c_\inv$: $U_p(\lambda, c_\inv) = p(\lambda) \cdot c_\inv = f_0 \cdot c_\inv$.
\end{proof}

\section{Discussion}
\label{sec:discussion}

Theorems~\ref{thm:invariance} and~\ref{thm:strict} together provide the complete answer to the question: when does throughput change under multiplicative stage-local improvement?
The answer---at least one bottleneck must be improved for any change, and all must be improved for strict increase---sharpens the informal Theory of Constraints claim into an exact biconditional.
In particular, it handles tied bottlenecks correctly, which informal reasoning typically suppresses.
Theorem~\ref{thm:ceiling} makes precise the common intuition that AI cannot replace human judgement.
The bound $T(\Pi^\alpha) \le \min_{h \in H} c_\Pi(h)$ says that if some stages are designated as requiring human authority, then no amount of machine acceleration can push throughput past the slowest human stage.
The tightness result says this ceiling is not loose: it is exactly achievable.
This has direct implications for security operations staffing such as the human bottleneck is the ultimate constraint.
Theorem~\ref{thm:adversarial} identifies the correct quantity for comparing AI's impact on attackers versus defenders, not raw stage speedup, but \emph{relative throughput gain}.
Corollary~\ref{cor:defmiss} gives the concrete consequence: if the defender invests in non-bottleneck stages while the attacker improves its bottleneck, the defender's investment has zero throughput effect and the ratio worsens.
Theorem~\ref{thm:impossibility} is the most unusual result, which proves that a commonly discussed effect (useful throughput declining with increased detection sensitivity) \emph{cannot occur} under the standard fixed-$f$ model.
The intended behaviour requires a model where precision degrades with rate (Proposition~\ref{prop:repaired}).
We view this as a contribution to modelling methodology: the negative theorem prevents a false claim, and the repair identifies the minimal additional assumption needed.
No proof in this paper invokes the stage ordering $\prec_\Pi$.
The results hold for any finite set of positive capacities under multiplicative perturbation.
We retain the ordering because the motivating application---cybersecurity pipelines---has a natural sequential structure (detection, triage, investigation, response), and because pipeline without ordering would be a misleading metaphor.

\paragraph{Which assumptions are essential.}
The most critical assumption is that throughput equals the stagewise minimum (Assumption~\ref{ax:min}).
Under any other aggregation rule (sum, harmonic mean, max flow), the specific invariance and migration characterisations fail.
The second critical assumption is stage-local multiplicative perturbation (Assumption~\ref{ax:local}): if improving one stage can degrade another through coupling, the invariance theorem is false.
The admissibility constraint $\alpha \ge 1$ (Assumption~\ref{ax:ge1}) is essential for Theorem~\ref{thm:strict} but could be weakened for Theorem~\ref{thm:invariance}; the $(\Leftarrow)$ direction holds even if $\alpha(v) < 1$ is permitted, provided the stage with $\alpha(v_0) = 1$ still pins the minimum.

\section{Limitations and Future Work}
\label{sec:limits}

The theory characterises when throughput changes but not by how much: no closed form expression for $T(\Pi^\alpha) - T(\Pi)$ as a function of the multiplier profile is provided beyond the normal form $T(\Pi^\alpha) = \min_v \alpha(v) c(v)$.
No optimal multiplier allocation under budget constraints is derived.
No stochastic, queueing, or game theoretic extension is attempted.
No empirical validation is provided.
The model is deterministic, bufferless, and serial.
Real cybersecurity pipelines have parallel branches, rework loops, shared resources, and stochastic workloads.
The minimum-throughput axiom captures the bottleneck phenomenon but misses every interaction effect.
The theory should be read as what follows from the bottleneck assumption, not how real SOCs work.
Assumption~\ref{ax:human} ($\alpha(h) = 1$) is an idealisation.
AI may partially assist human stages, yielding $\alpha(h) \in (1, \beta_h]$.
The ceiling theorem generalises cleanly to this setting---$T(\Pi^\alpha) \le \min_{h \in H} \beta_h \cdot c_\Pi(h)$, with the same proof structure but this extension is not developed here.
Theorem~\ref{thm:adversarial} treats attacker and defender as independent.
It does not model strategic interaction, arms-race dynamics, or information asymmetry.
A gametheoretic formulation remains future work.
The false positive section (\S\ref{sec:results}, final subsection) is formally independent of the pipeline theory.
The connection that $c_\inv$ corresponds to a pipeline stage capacity and $\lambda$ to upstream output is interpretive, not proved.
Integrating the two models is a natural extension.

\paragraph{Open problems.}
The most important open problem is extending the theory to stochastic or queueing throughput functionals, where the minimum is replaced by a long run average rate.
Almost none of the current proofs survive in that setting, and new techniques would be required.
A second open problem is the multiplier allocation optimisation: given a cost function on multipliers, which allocation maximises throughput?
This is trivially answered for the deterministic model (invest everything in the bottleneck), but becomes non trivial under stochastic variation, risk aversion, or multi constraints.

\section{Conclusion}
\label{sec:conclusion}

We have presented a formal theory of throughput in finite serial pipelines under multiplicative stage perturbations, proving exact characterisations of throughput invariance (Theorem~\ref{thm:invariance}), strict improvement (Theorem~\ref{thm:strict}), human authority ceilings (Theorem~\ref{thm:ceiling}), adversarial throughput comparison (Theorem~\ref{thm:adversarial}), and false positive model limitations (Theorem~\ref{thm:impossibility}, Proposition~\ref{prop:repaired}).
The contribution is not mathematical depth but formal precision applied to a domain where only informal arguments currently exist.
The theory provides a foundation for evidence based analysis of AI deployment in cybersecurity operations.
The most important open question is whether the core invariance and ceiling results can be extended to stochastic throughput functionals that model the queueing and variability present in real systems.
Such an extension would require fundamentally new proof techniques and constitutes a substantial research programme.


\end{document}